\def\BibTeX{{\rm B\kern-.05em{\sc i\kern-.025em b}\kern-.08em
    T\kern-.1667em\lower.7ex\hbox{E}\kern-.125emX}}
\newtheorem{lemma}{Lemma}
\begin{document}

\title{Online Scheduling of Transmission and Processing for AoI Minimization with Edge Computing}

\author{
    \IEEEauthorblockN{
        Jianhang Zhu, Jie Gong
    }

    \IEEEauthorblockA{
        \textit{Guangdong Key Laboratory of Information Security Technology},\\
        \textit{School of Computer Science and Engineering},
        \textit{Sun Yat-sen University},
        Guangzhou, 510006, China \\
        Email:
        zhujh26@mail2.sysu.edu.cn,
        gongj26@mail.sysu.edu.cn
    }
}

\maketitle

\begin{abstract}
Age of Information (AoI), which measures the time elapsed since the generation of the last received packet at the destination, is a new metric for real-time status update tracking applications. In this paper, we consider a status-update system in which a source node samples updates and sends them to an edge server over a delay channel. The received updates are processed by the server with an infinite buffer and then delivered to a destination. The channel can send only one update at a time, and the server can process one at a time as well. The source node applies \emph{generate-at-will} model according to the state of the channel, the edge server, and the buffer. We aim to minimize the average AoI with \emph{independent and identically distributed} transmission time and processing time. We consider three online scheduling policies. The first one is the optimal \emph{long wait} policy, under which the source node only transmits a new packet after the old one is delivered. Secondly, we propose a \emph{peak age threshold} policy, under which the source node determines the sending time based on the estimated peak age of information (PAoI). Finally, we improve the \emph{peak age threshold} policy by considering a postponed plan to reduce the waiting time in the buffer. The AoI performance under these policies is illustrated by numerical results with different parameters.

\end{abstract}

% \begin{IEEEkeywords}
% component, formatting, style, styling, insert
% \end{IEEEkeywords}

\section{Introduction}

With the increase of smart devices, the Internet-of-Things (IoT) has emerged as a new digital structure, which connects billions of things, such as small sensors, wearables, vehicles, and actuators, to the Internet. Thanks to the vast data collection capability of IoT, many real-time applications such as remote monitoring and control, phase data update in the smart grid, environment monitoring for autonomous driving, etc., can be realized, where the information freshness directly affects application performance. The concept of \emph{age of information} (AoI), or simply \emph{age}, defined as the time elapsed since the generation of the latest received update, was firstly proposed in \cite{kaul2012real} as a new metric to quantify information freshness. On the other hand, information in packets generated by a device needs to be extracted via computation. Due to the limited computing resources of the end device, using the edge server in the \emph{mobile edge computing} (MEC) \cite{MEC} to process packets is an effective method. Thus, it is a critical issue to jointly optimize transmission and processing for a computational-intensive status update.

We consider the status update system with MEC as shown in Fig. \ref{fig:system}. Since both transmission and processing can be regarded as services, a MEC system can be regarded as a two-hop network, where the first hop is transmission and the second hop is processing. There are a lot of research efforts on two-hop networks. In particular, the age minimization in a two-hop relay system, under a resource constraint on the average number of forwarding operations at the relay, is considered in \cite{TwohopCMDP}. The result shows that the optimal policy is multi-threshold. A two-hop network with energy harvesting nodes is studied in \cite{Twohop}, where both transmission and relay consume energy. There are also many papers that analyze AoI in multi-hop networks. Ref. \cite{Sunmultihop} studies the age of a single information flow in an  interference-free multi-hop network. A preemptive last-generated, first-served (LGFS) policy is proven to be optimal. The age of multi-hop multicast networks is examined in \cite{BBmultihop}.

\begin{figure}[!t]
    \centering
    \includegraphics[width=0.7\linewidth]{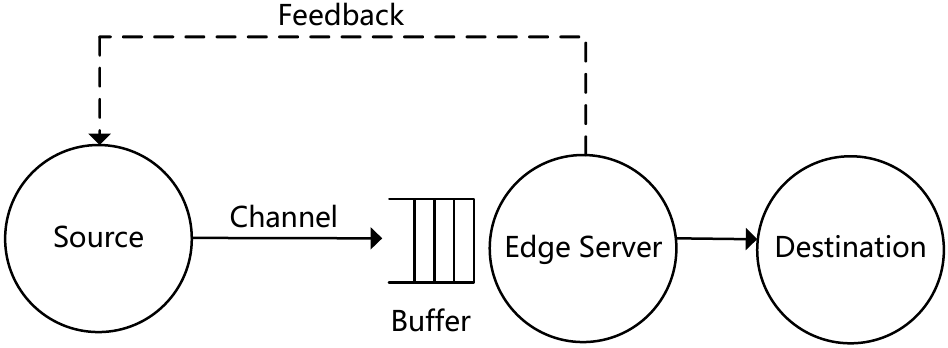}
    \caption{System model.}\label{fig:system}
\end{figure}

In recent years, there are rising research interests on the age of information in the MEC system. The average AoI with exponential transmission time and service time was analyzed in \cite{US1}. The offline scheduling problem is studied in \cite{US2}, where transmission and computing are jointly scheduled to minimize AoI. Task scheduling, computational offloading, and energy consumption are considered together to minimize AoI in \cite{Xian2019}. Ref. \cite{Pzhou2021} considers the collaboration of an edge server and a cloud center. To reduce AoI in the cloud center, the computing offloading policy in the edge server is investigated. In order to fully utilize edge resources and minimize the overall AoI of the whole MEC system, an online auction mechanism, called PreDisc, is proposed in \cite{Hlv2021}. Since the task processing in the MEC system usually includes multiple stages, joint scheduling to optimize AoI is a difficult problem. A most closely related work \cite{Pzhou2021Opt} studied the average AoI under different packet management rules in a tandem queue, where the packet arrival is a Poisson process. Because the packet processing in the edge server is influenced by the packet generation process, calculating the optimal scheduling policy in a MEC system is typically difficult. Most studies consider random packet generation or slotted systems to simplify the analysis. In this paper, the source node applies \emph{generate-at-will} model in continuous time.

A simple policy for the generate-at-will model is that the source node generates a new packet only after the old packet is delivered to the destination, termed as \emph{long wait} policy. The simplified system can be viewed as a single hop transmission as studied in \cite{SunUp}. Under this simplified strategy, packets will never enter the buffer of the edge server to wait. Intuitively, the long wait policy helps reduce the age of information. However, it is far from optimal in many cases. The following example can explain this.

\textit{Example: Suppose the source node sends a sequence of packets to the edge server. The transmission time of each packet is 2 and the processing time at the edge server is 1. Suppose that packet 1 is transmitted at time 0, and the initial AoI is 3. Under the long wait policy, the average AoI is 4.5 if the source node immediately sends a new packet when the old packet is delivered. For comparison, consider the source continuously transmitting packets and the edge server processing packets whenever available. Thus, the next packet starts to transmit when the previous one is being processed at the edge server. The average AoI is 4 under this policy. It is obvious that the long wait policy is not optimal.}

The above example shows that the source node sending packet when the edge server is busy can effectively reduce AoI. However, it may also cause the packet to wait in the buffer and thus increase AoI. Our goal is to optimize the source node's packets sending policy to minimize the average AoI of the destination. In order to minimize AoI, the arrived packet at the edge server or in the buffer should be served immediately when the server is idle.

In this paper, we study and compare three scheduling policies. Firstly, the optimal long wait policy is given, which has a threshold structure. Secondly, the source node is allowed to send a new packet when the edge server is busy. The source node estimates the \emph{peak age of information} (PAoI) and decides the transmission time according to a constant threshold. This policy is called the \emph{peak age threshold} policy. Finally, we improve the peak age threshold policy by trying to prevent packets from entering the buffer: the source node compares the estimated arrival time of a new packet at the server versus the delivery time of the old packet in the server and only sends the new packet when the former is larger. Numerical results illustrate the different behaviors of the AoI curve with different parameters.

% The rest of this paper is as organized as follows. In Sec. \ref{sec:system}, we describe the system model and the formulation of the average age minimization problem. In Sec. \ref{sec:policy}, we develop the peak age threshold policy and consider an online constraint. The numerical results are discussed in Sec. \ref{sec:numerical}. Finally, in Sec. \ref{sec:conclusion}, we conclude the paper.

\section{System model and problem formulation}
\label{sec:system}
\subsection{System Model}
We consider an information update system depicted in Fig.~\ref{fig:system}, where a source node generates update packets and sends them to an edge server through a channel and then the server processes the packets and sends the result to a destination. The source node generates and submits update packets at successive times $t_0,t_1,\cdots$ and the server node starts to serve the packets at successive times $c_0,c_1,\cdots$. The server will send one-bit feedback to the source when a packet arrives at the server (e.g., send 1 back to the source) or is delivered to the destination (e.g., send 0 back). When the edge server is idle, the oldest packet in the buffer should be processed immediately, otherwise, it will cause an unnecessary waiting time. Therefore, the source node can access the idle/busy state of the channel and the server, and the number of packets in the buffer through feedback. After the source sends update $k-1$ , the channel will be busy until it receives the $k$-th 1, and the server will be busy until it receives the $k$-th 0. Suppose the source has received 0 from packets $1,2,\cdots,l$, there exist $k-l-1$ packets in the buffer.

\begin{figure}[!t]
    \centering
    \includegraphics[width=0.7\linewidth]{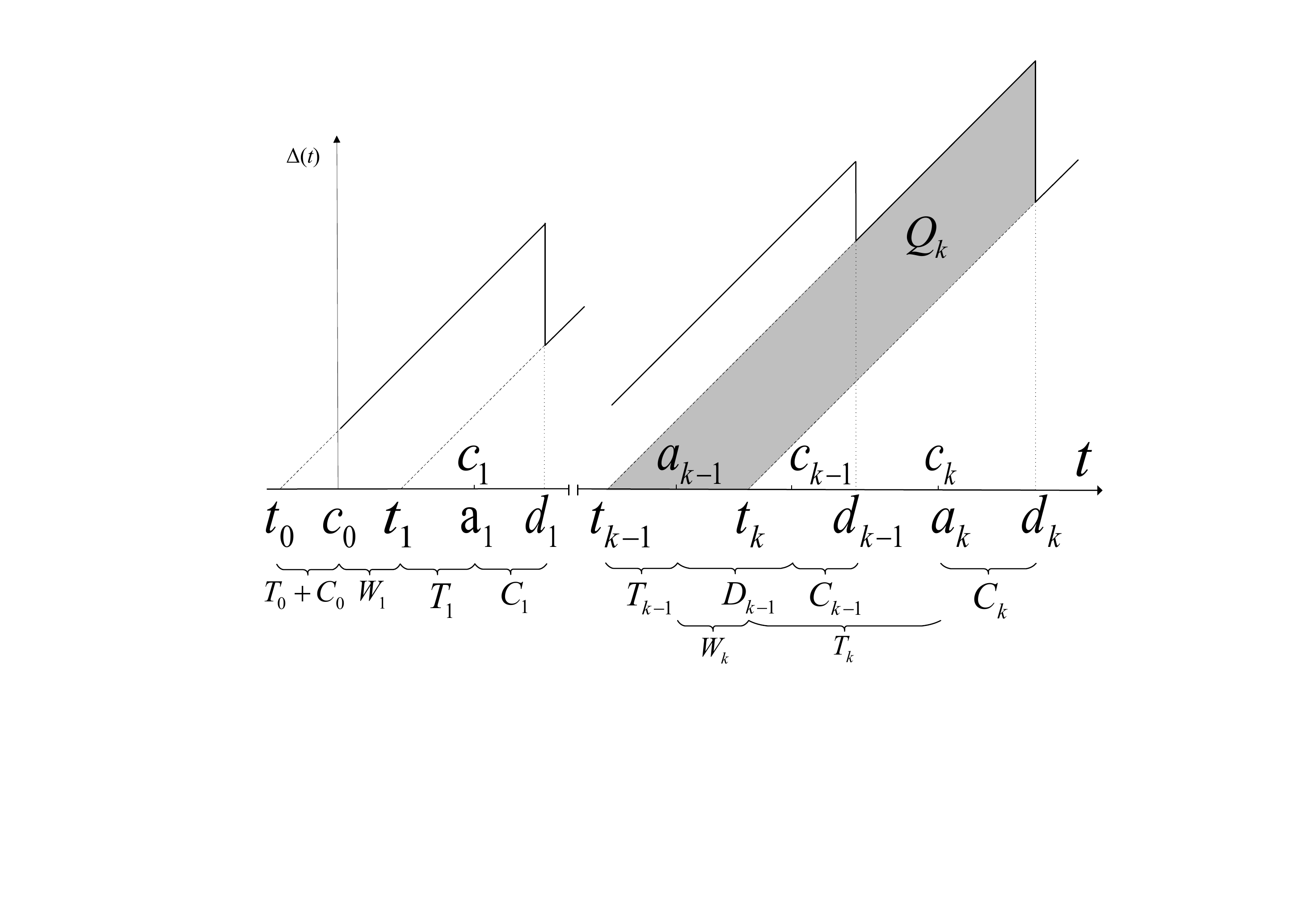}
    \caption{Age curve.}\label{fig:age}
\end{figure}

Suppose that packet $k$ is generated at time $t_k$ with transmission time duration $T_k\ge 0$, it will arrive at the server at time $a_k=t_k+T_k$. Suppose that packet $k$ is processed by the server at time $c_k$, it will be delivered to the destination at time $d_k=c_k+C_k$, where $C_k$ is the processing time duration of packet $k$. Since packet $k$ should be processed immediately after it arrives at the server and packets $k-1$ is delivered, we have
\begin{equation}
    c_k=\max\{t_k+T_k,c_{k-1}+C_{k-1}\}.\label{eq:ck}
\end{equation}
We assume that packet 0 is submitted to an idle channel at time $t_0=-T_0-C_0$ and is delivered at time $d_0=c_0+C_0=0$ with $c_0=-C_0$, as shown in Fig. \ref{fig:age}. When packet $k-1$ arrives at the server, the source will wait for a time duration $W_k>0$ and then send packet $k$ at time $t_{k-1}+T_{k-1}+W_k$, that is,
\begin{equation}
    t_{k}=t_{k-1}+T_{k-1}+W_k.\label{def:wk}
\end{equation}
Denote $D_k$ as the waiting time of packet $k$ in the buffer, we have
\begin{equation}
    D_k=c_k-t_k-T_k.\label{def:dk}
\end{equation}
We assume that $T_1,T_2,\cdots$ are \emph{independent and identically distributed} (i.i.d.) with mean $\mathbb{E}[T]$ and $C_1,C_2,\cdots$ are \emph{i.i.d.} with mean $\mathbb{E}[C]$, where $0< \mathbb{E}[T],\mathbb{E}[C]<\infty$.

At any time $t$, the most recently received packet is generated at time
\begin{equation}
    U(t)=\text{max}\{t_k|c_k+C_k\le t\}.
\end{equation}
The age of information $\Delta(t)$ is defined as
\begin{equation}
    \Delta(t)=t-U(t),
\end{equation}
As shown in Fig. \ref{fig:age}, the age $\Delta(t)$ jumps down when a new packet is delivered to the destination, and increases linearly otherwise.

\subsection{Problem Formulation}
We consider the long-term average age minimization problem. We divide the area under the age curve $\Delta(t)$ into a sum of multiple disjoint trapezoids $Q_k$'s, as shown in Fig. \ref{fig:age}. We have
\begin{equation}
    Q_k=\frac{1}{2}\left(c_k+C_k-t_{k-1}\right)^2-\frac{1}{2}\left(c_k+C_k-t_{k}\right)^2.\label{def:qk}
\end{equation}
In the interval $[t_0,t_n]$, where $t_n=t_0+\sum_{k=1}^{n}(T_{k-1}+W_k)$ and $n\rightarrow \infty$, the average AoI can be expressed as
\begin{equation}
    \overline{\Delta}=\underset{n \rightarrow \infty}{\text{lim}}\frac{\mathbb{E}[\sum_{k=1}^{n}Q_k]}{\mathbb{E}[\sum_{k=1}^{n}(T_{k-1}+W_k)]}.\label{eq:longAage}
\end{equation}
We can minimize the average age \eqref{eq:longAage} by controlling the sequence of waiting times $(W_1,W_2,...)$. Denote $\pi=(W_1,W_2,...)$ as a packet generation policy. We consider causal policies, in which $W_k$ is determined based on historical information and current system status, i.e., the idle/busy state of the server and the number of packets in the buffer, as well as the distribution of the transmission and processing times $(T_0,C_0,T_1,C_1,\cdots)$. Note that to determine $W_k$, $D_{k-1}$ and $C_{k-1}$ may be historical information or future information, because packet $k-1$ may still be processed or wait in the buffer when packet $k$ is generated. Specifically, $D_{k-1}$ is available when $t\ge c_{k-1}$ and $C_{k-1}$ is available when $t\ge c_{k-1}+C_{k-1}$. Let $\Pi$ denote the set of all causal policies. The optimal update problem for minimizing the average age can be formulated as
\begin{equation}
    \Delta^*=\underset{\pi\in\Pi}{\text{min}}\underset{n \rightarrow \infty}{\text{lim}}\frac{\mathbb{E}[\sum_{k=1}^{n}Q_k]}{\mathbb{E}[\sum_{k=1}^{n}(T_{k-1}+W_k)]}.\label{eq:problem}
\end{equation}
We assume that $(T_1,\cdots),(C_1, \cdots)$ are \emph{stationary and ergodic} random processes. By considering stationary policy, \eqref{eq:longAage} is equal to
\begin{equation}
    \overline{\Delta} =\frac{\mathbb{E}[Q_k]}{\mathbb{E}[T_{k-1}+W_k]}.\label{eq:longAageS}
\end{equation}
By substituting \eqref{def:wk} and \eqref{def:dk} into \eqref{def:qk}, we have
\begin{align*}
    \mathbb{E}[Q_k]&=\frac{1}{2}\mathbb{E}[(T_{k-1}+W_k)(T_{k-1}+W_k+2T_k+2D_k+2C_k)]\\
    &=\frac{1}{2}\mathbb{E}[(T_{k-1}+W_k)(T_{k-1}+W_k+2D_k)]\\
    &\quad +\mathbb{E}[T_{k-1}+W_k]\mathbb{E}[T_k+C_k],
\end{align*}
where the second equation holds because $W_k$ is independent with $T_k$ and $C_k$, and $T_k$'s are \emph{i.i.d.}. We can reformulate the problem \eqref{eq:problem} as
\begin{equation}
    \begin{split}
        \Delta^*=\underset{\pi\in\Pi}{\text{min}}\frac{\mathbb{E}[(T_{k-1}+W_k)(T_{k-1}+W_k+2D_k)]}{2 \mathbb{E}[T_{k-1}+W_k]}&\\
        +\mathbb{E}[T_k+C_k].& \label{eq:Problem}
    \end{split}
\end{equation}

\section{Stationary Deterministic Policies}
\label{sec:policy}

In this section, we consider \emph{stationary deterministic} policies for problem \eqref{eq:Problem}, where $W_k$ is determined based on $T_{k-1}$, $D_{k-1}$ and $C_{k-1}$. We define three functions $f,g,h$ to calculate $W_k$ depending on whether $D_{k-1}$, $C_{k-1}$ are available at time $t_k$. If $t_k<c_{k-1}$, only $T_{k-1}$ is available, we have $W_k=f(T_{k-1})<D_k$. If $c_{k-1}\le t_k<c_{k-1}+C_{k-1}$, only $C_{k-1}$ is not available, we have $D_{k-1}\le W_k=g(T_{k-1},D_{k-1})<D_{k-1}+C_{k-1}$. If $t_k\ge c_{k-1}+C_{k-1}$, then $T_{k-1}$, $D_{k-1}$ and $C_{k-1}$ are all available, we have $W_k=h(T_{k-1},D_{k-1},C_{k-1})\ge D_{k-1}+C_{k-1}$. In this paper, we consider three policies. In the \emph{long wait} policy, the source node only sends a new packet after the old one is delivered. In the \emph{peak age threshold} policy, the source node sends a new packet after the old one starts to be processed. In the \emph{peak age threshold with postponed plan} policy, we consider reducing the waiting time of the packet in the buffer by a postponed plan. These policies are detailed as follows.
%Let $\Pi_{SD}$ denote the set of all stationary deterministic policies.
\subsection{Long Wait Policy}
Consider the source node can only send a new packet $k$ after packet $k-1$ is delivered, i.e.,
\begin{equation}
    t_k\ge c_{k-1}+C_{k-1}.\label{cons:longw}
\end{equation}
After packet $k$ is sent, the source node will wait until it is delivered to the destination. In this case, packet will be processed immediately when it arrives at the server. Thus, we have $D_k=0$ and $W_k\ge C_{k-1}$. A policy $\pi\in \Pi_{SD}$ is said to be a \emph{long wait} policy if $W_k=h(T_{k-1},C_{k-1})$ and $f(\cdot )=g(\cdot )=\infty$.

Since $T_k$ and $C_k$ are \emph{i.i.d.}, the status update system in this case is equivalent to a system composed of a sources node, a delay channel and a destination, as studied in \cite{SunUp}. The transmission plus processing time $T_k+C_k$ in this paper is equivalent to the transmission time $Y_k$ in \cite{SunUp}. The interval between the delivery of packet $k-1$ and the generation of packet $k$, termed as $W_k-C_{k-1}$, is equivalent to $Z_k$ in \cite{SunUp}. Therefore, we have the optimal long wait policy based on \cite[Theorem 4]{SunUp} as follows.
\begin{lemma}\label{newLemma1}
    If $\mathbb{E}[T]+\mathbb{E}[C]\ge 0$, the optimal long wait policy to problem \eqref{eq:problem} is
    \begin{equation}
        h(T_{k-1},C_{k-1})=\max\{C_{k-1},\beta-T_{k-1}\},\label{eq:longwh}
    \end{equation}
    where $\beta$ satisfies
    \begin{equation}
        \mathbb{E}[T+C+h(T,C)]=\frac{\mathbb{E}[\left(T+C+h(T,C)\right)^2]}{2\beta}.\label{eq:optimalbeta}
    \end{equation}
\end{lemma}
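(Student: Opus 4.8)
The plan is to reduce Lemma~\ref{newLemma1} entirely to \cite[Theorem 4]{SunUp} by making the equivalence between the two systems completely precise, so that the stated optimal threshold structure follows as a corollary. First I would verify that under the long wait constraint~\eqref{cons:longw}, every packet is processed immediately upon arrival, so that $D_k=0$ for all $k$. With $D_k=0$, the recursion~\eqref{eq:ck} collapses: since $t_k\ge c_{k-1}+C_{k-1}=d_{k-1}$, we get $a_k=t_k+T_k\ge d_{k-1}$, hence $c_k=t_k+T_k$ and the total service-plus-delivery of packet $k$ occupies the single interval from $t_k$ to $d_k=t_k+T_k+C_k$. This is exactly a one-hop status-update system in which packet $k$ has generation time $t_k$ and ``transmission time'' $Y_k:=T_k+C_k$, received at $t_k+Y_k$.

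Next I would set up the dictionary between quantities. I would identify $Y_k=T_k+C_k$ (i.i.d.\ with finite mean, inheriting this from the i.i.d.\ and finite-mean assumptions on $T_k,C_k$) and the idle waiting interval $Z_k:=W_k-C_{k-1}\ge 0$ between the delivery of packet $k-1$ and the generation of packet $k$; the constraint $W_k\ge C_{k-1}$ guarantees $Z_k\ge 0$, matching the feasibility requirement in \cite{SunUp}. I would then check that the objective~\eqref{eq:Problem}, after substituting $D_k=0$, reduces to the age-penalty functional minimized in \cite{SunUp}: with $D_k=0$ the first term becomes $\mathbb{E}[(T_{k-1}+W_k)^2]/(2\mathbb{E}[T_{k-1}+W_k])+\mathbb{E}[T_k+C_k]$, and since $T_{k-1}+W_k=T_{k-1}+Z_k+C_{k-1}=Z_k+Y_{k-1}$, the inter-generation interval equals $Y_{k-1}+Z_k$, which is precisely the ``wait plus service'' decomposition in \cite{SunUp}. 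This confirms the average AoI in the two systems is the same functional of the decision variables.

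Having established the equivalence, I would invoke \cite[Theorem 4]{SunUp}, whose optimal policy is a threshold rule on the running age: transmit a new packet only when the current age (here $Y_{k-1}$ since the last delivery plus the accumulated wait) reaches a threshold $\beta$. Translating back, the optimal idle wait is $Z_k=(\beta-Y_{k-1})^+$, i.e.\ $W_k-C_{k-1}=\max\{0,\beta-(T_{k-1}+C_{k-1})\}$, which rearranges to $W_k=\max\{C_{k-1},\beta-T_{k-1}\}=h(T_{k-1},C_{k-1})$, matching~\eqref{eq:longwh}. The optimality equation for the threshold from \cite[Theorem 4]{SunUp}, namely that $\beta$ equals the optimal average age and satisfies the fixed-point relation $\mathbb{E}[Y+Z]=\mathbb{E}[(Y+Z)^2]/(2\beta)$ with $Y+Z=T+C+h(T,C)$, yields~\eqref{eq:optimalbeta} directly.

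The main obstacle I anticipate is not any single computation but the care required in the dictionary step: I must confirm that the stationary-and-ergodic framing, the finite-mean conditions, and the measurability/causality of the waiting decision all transfer so that the hypotheses of \cite[Theorem 4]{SunUp} are genuinely satisfied under the long wait constraint. In particular I would double-check that $W_k=h(T_{k-1},C_{k-1})$ is causal in the sense required (both $T_{k-1}$ and $C_{k-1}$ are available at generation time $t_k\ge c_{k-1}+C_{k-1}$, consistent with the definition of $h$ in the $t_k\ge c_{k-1}+C_{k-1}$ regime), and that no boundary effects from the initialization $t_0=-T_0-C_0,\,d_0=0$ disturb the long-run average. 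The hypothesis ``$\mathbb{E}[T]+\mathbb{E}[C]\ge 0$'' is automatically true given $0<\mathbb{E}[T],\mathbb{E}[C]<\infty$, so the condition is vacuous and the equivalence-plus-citation strategy should close the proof cleanly.
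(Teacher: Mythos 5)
Your proposal takes exactly the route the paper itself takes: the paper's entire ``proof'' of Lemma~\ref{newLemma1} is the dictionary $Y_k=T_k+C_k$, $Z_k=W_k-C_{k-1}$ followed by a citation of \cite[Theorem 4]{SunUp}, and you reproduce that reduction with more care than the paper does --- verifying $D_k=0$ under \eqref{cons:longw}, the collapse of \eqref{eq:ck}, the reduction of the objective \eqref{eq:Problem} to the functional of \cite{SunUp}, the causality of $h$, and the vacuity of the hypothesis $\mathbb{E}[T]+\mathbb{E}[C]\ge 0$ (correct: it is trivially implied by $0<\mathbb{E}[T],\mathbb{E}[C]<\infty$, and is presumably a leftover of the condition on $\mathbb{E}[Y]$ in \cite{SunUp}). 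So in approach and in the derivation of \eqref{eq:longwh} you match the paper, and your version is the more complete one.

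One step deserves a concrete flag. In your final paragraph you invoke the fixed-point relation ``with $Y+Z=T+C+h(T,C)$,'' but this contradicts your own correct computation two paragraphs earlier, $T_{k-1}+W_k=Y_{k-1}+Z_k$, i.e.\ $Y+Z=T+h(T,C)=\max\{T+C,\beta\}$, since $h$ in \eqref{eq:longwh} denotes the full wait $W_k$, not the post-delivery idle time $Z_k=W_k-C_{k-1}$. The fixed point delivered by \cite[Theorem 4]{SunUp} is therefore $\mathbb{E}[T+h(T,C)]=\mathbb{E}[(T+h(T,C))^2]/(2\beta)$, whereas \eqref{eq:optimalbeta} as printed involves $T+C+h(T,C)=\max\{T+2C,\beta+C\}$ and is off by one copy of $C$; it appears to be a typo in the paper (it would be correct if the $h$ appearing in \eqref{eq:optimalbeta} meant $W_k-C_{k-1}$). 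Your write-up papers over this by asserting the mismatched identification so that \eqref{eq:optimalbeta} comes out ``directly''; a rigorous version should either state the corrected equation or note explicitly that \eqref{eq:optimalbeta} agrees with \cite{SunUp} only under the substitution of $h-C$ for $h$. Since the discrepancy originates in the paper's statement rather than in your reduction, this does not indicate a wrong approach on your part, but as written the last step does not follow from your own dictionary.
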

The parameter $\beta$ can be calculated by Algorithm 2 in \cite{SunUp}. The optimal long wait policy has a threshold structure. The source node tends to send packet $k$ at time $t_{k-1}+\beta$ if the constraint \eqref{cons:longw} is satisfied, otherwise, it will send packet $k$ at $t_{k-1}+T_{k-1}+C_{k-1}$ as $\beta<T_{k-1}+C_{k-1}$. The transmission interval of adjacent packets, $t_k-t_{k-1}$, has a threshold structure
\begin{equation*}
    t_{k}-t_{k-1}=\max\{\beta,T_{k-1}+C_{k-1}\}.
\end{equation*}

There is an example of the source node transmitting a sequence of packets depicted in Fig. \ref{fig:sub}, where the red box represents the transmission time and the blue box represents the processing time. As shown in Fig. \ref{fig:sub} (a), under the long wait policy, the start of the red box is always behind the end of the previous blue box. An example of a non-long wait policy is shown in Fig. \ref{fig:sub} (b). Compared with the long wait policy, the source node sends new packets when the server is busy. Intuitively, the non-long wait policy fully utilizes the transmission and processing ability of the system. It is expected to improve AoI performance, which however, requires carefully design as it may lead to unexpected waiting in the buffer. Our goal in this paper is to design better policies to reduce the average AoI.

\begin{figure}
    \centering
    \subfloat[]{\includegraphics[width=0.7\linewidth]{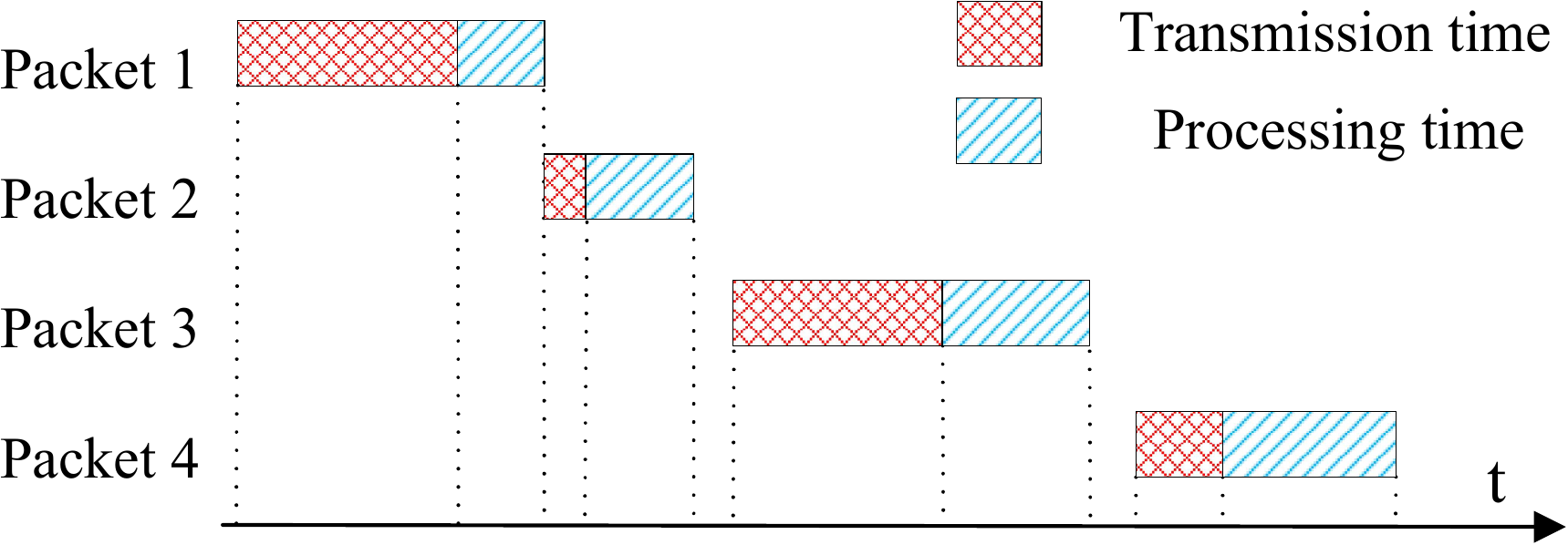}}

    \subfloat[]{\includegraphics[width=0.7\linewidth]{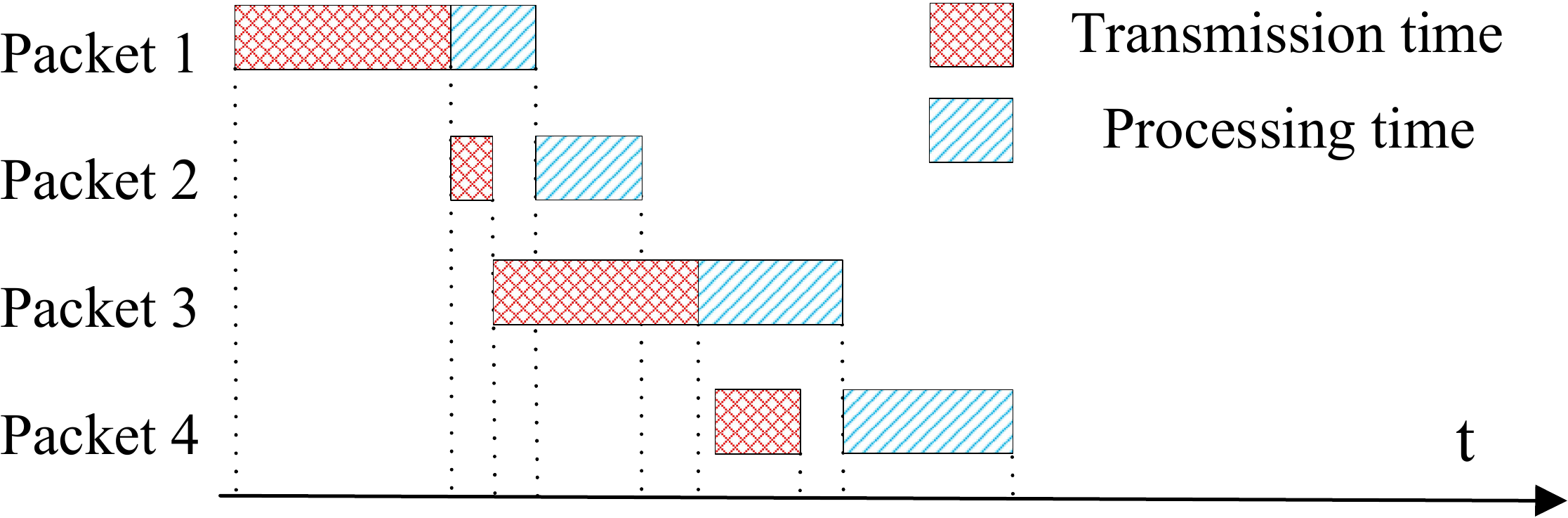}}
    \caption{Example of the source node transmits packet (a)with constraints \eqref{cons:longw} and (b)without constraints \eqref{cons:longw}.}
    \label{fig:sub}
\end{figure}

\subsection{Peak Age Threshold Policy}

Consider the source node can send a new packet $k$ after packet $k-1$ starts to be processed, i.e.,
\begin{equation}
    t_k\ge c_{k-1}.\label{cons:loosercons}
\end{equation}
In this case, there is at most one packet in the buffer, that is, the packet arrives at the server while the previous packets are still under processing.

We consider the \emph{peak age of information} (PAoI) proposed in \cite{PAoI}. The PAoI of packet $k$ is the height of the highest point of the area $Q_k$ in \eqref{fig:age}, i.e., $A_k=c_k+C_k-t_{k-1}$. Suppose the source node sends packet $k$ at time $t\ge c_{k-1}$, the PAoI $A_k$ can be estimated according to the distribution of $T_k,C_{k-1}$.
In particular, if $t\ge c_{k-1}+C_{k-1}$, $c_k$ can be estimated by $\hat{c}_{k}=t_k+\mathbb{E}[T_k]$. Let $\hat{A}_k^h$ denote the estimation of $A_k$ in this case, we have
\begin{equation}
    \hat{A}_k^h=t+\mathbb{E}[T_k]+\mathbb{E}[C_k]-t_{k-1}.\label{eq:akh}
\end{equation}
If $c_{k-1}\le t<c_{k-1}+C_{k-1}$, we have $\hat{c}_k=\max\{t+\mathbb{E}[T_{k}],c_{k-1}+\mathbb{E}[C_{k-1}|C_{k-1}>t-c_{k-1}]\}$ based on \eqref{eq:ck}, in which $\mathbb{E}[C_{k-1}|C_{k-1}>t]$ is a conditional probability and is non-decreasing with respect to $t$ for any distribution of $C_{k-1}$. Let $\hat{A}_k^g$ denote the estimation of $A_k$ in this case, we have
\begin{equation}
    \begin{split}
        \hat{A}_k^g=&\max\{t+\mathbb{E}[T_{k}],c_{k-1}+\mathbb{E}[C_{k-1}|C_{k-1}>t-c_{k-1}]\}\\
                    &+\mathbb{E}[C_k]-t_{k-1}.\label{eq:akg}
    \end{split}
\end{equation}
Note that $\hat{A}_k^h$ and $\hat{A}_k^g$ are non-decreasing with respect to $t$. Based on the long wait policy, we propose a heuristic, peak age threshold-based policy. With constraint \eqref{cons:loosercons}, we need to set $f(\cdot)=\infty$. A policy $\pi$ is said to be a \emph{peak age threshold} policy if
\begin{align}
    h_p(T_{k-1},D_{k-1},C_{k-1})&=\widetilde{t}^h_p-t_{k-1}-T_{k-1},\\
    g_p(T_{k-1},D_{k-1})&=\widetilde{t}^g_p-t_{k-1}-T_{k-1},
\end{align}
where
\begin{align}
    \widetilde{t}^g_p&=\underset{t\ge c_{k-1},\hat{A}_k^g\ge \lambda}{\text{arg min}}\hat{A}_k^g,\label{def:tg}\\
    \widetilde{t}^h_p&=\underset{t\ge c_{k-1}+C_{k-1}, \hat{A}_k^h\ge \lambda}{\text{arg min}}\hat{A}_k^h.\label{def:th}
\end{align}
The peak age threshold policy is an online policy. After packet $k-1$ starts to be processed, the source node plans to send packet $k$ at time $\widetilde{t}^g_p$ calculated from \eqref{def:tg} unless packet $k-1$ is delivered before time $\widetilde{t}^g_p$. Specifically, if at time $t<\widetilde{t}^g_p$, the packet $k-1$ is delivered, then the source node will calculate $\widetilde{t}^h_p$ from \eqref{def:th} and send packet $k$ at this time. From \eqref{eq:akg} we have $\hat{A}_k^h\ge \mathbb{E}[T]+\mathbb{E}[C]$. Thus, when $\lambda\le \mathbb{E}[T]+\mathbb{E}[C]$, the source node will immediately send a new packet when the old packet starts to be processed as $\widetilde{t}^g=c_{k-1}$ for all $k$.

We consider using a line search algorithm to numerically compute the optimal $\lambda$. The following lemma shows that when $\lambda$ is sufficiently large, the average AoI $\Delta_{p}$ under the peak age threshold policy increases linearly with respect to $\lambda$.

\begin{lemma}\label{newLemma2}
    When $\lambda$ is sufficiently large, we have
    \begin{equation}
        \Delta_{p}=\frac{1}{2}\left(\lambda+\mathbb{E}[T]+\mathbb{E}[C]\right),\label{eq:bigD}
    \end{equation}
    with probability 1.
\end{lemma}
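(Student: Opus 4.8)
The plan is to show that once $\lambda$ is large the policy degenerates into sending packets at a fixed deterministic spacing with an empty buffer, after which \eqref{eq:bigD} follows by substituting constants into the reformulated objective \eqref{eq:Problem} and invoking ergodicity.

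First I would exploit the affine structure of the h-estimate. Since $\hat{A}_k^h=t+\mathbb{E}[T]+\mathbb{E}[C]-t_{k-1}$ is increasing in $t$, the constraint $\hat{A}_k^h\ge\lambda$ in \eqref{def:th} is equivalent to $t\ge t_{k-1}+\lambda-\mathbb{E}[T]-\mathbb{E}[C]$, so that $\widetilde{t}^h_p=\max\{c_{k-1}+C_{k-1},\,t_{k-1}+\tau\}$ with $\tau:=\lambda-\mathbb{E}[T]-\mathbb{E}[C]$. I would then argue that for large $\lambda$ the g-target $\widetilde{t}^g_p$ of \eqref{def:tg} falls past the delivery epoch $d_{k-1}=c_{k-1}+C_{k-1}$: the estimate $\hat{A}_k^g$ is a fixed increasing function of $t$ not depending on $\lambda$, and $\hat{A}_k^g(d_{k-1})$ is finite, so once $\lambda$ exceeds this value the first crossing of the level $\lambda$ occurs after $d_{k-1}$; hence packet $k-1$ is delivered before the planned g-instant and the source switches to the h-rule, giving $t_k=\widetilde{t}^h_p$.

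Next I would establish the structural fact that for large $\tau$ one has $\tau\ge T_{k-1}+C_{k-1}$ with no residual queueing, so $c_{k-1}=t_{k-1}+T_{k-1}$, $d_{k-1}=t_{k-1}+T_{k-1}+C_{k-1}$, and therefore $t_k=t_{k-1}+\tau$ exactly. This also yields $a_k=t_k+T_k\ge t_{k-1}+\tau\ge d_{k-1}$, so packet $k$ is served on arrival and $D_k=0$. With the inter-generation interval $T_{k-1}+W_k=\tau$ deterministic and $D_k=0$, the objective \eqref{eq:Problem} collapses to $\Delta_p=\tau^2/(2\tau)+\mathbb{E}[T]+\mathbb{E}[C]=\tau/2+\mathbb{E}[T]+\mathbb{E}[C]$, and substituting $\tau=\lambda-\mathbb{E}[T]-\mathbb{E}[C]$ gives \eqref{eq:bigD}. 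For the almost-sure version I would work pathwise: \eqref{def:qk} gives $Q_k=\tfrac12(\tau+T_k+C_k)^2-\tfrac12(T_k+C_k)^2=\tfrac12\tau^2+\tau(T_k+C_k)$, so dividing $\sum_{k=1}^n Q_k$ by $\sum_{k=1}^n(T_{k-1}+W_k)=n\tau$ and applying Birkhoff's ergodic theorem to the stationary ergodic sequence $(T_k,C_k)$ drives the ratio to $\tfrac12(\lambda+\mathbb{E}[T]+\mathbb{E}[C])$ with probability one.

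The hardest step is the assertion $\tau\ge T_{k-1}+C_{k-1}$ for every $k$: if $T$ and $C$ are unbounded this cannot hold surely at any finite $\lambda$, and on the rare event $\{T_{k-1}+C_{k-1}>\tau\}$ the spacing becomes $\max\{\tau,T_{k-1}+C_{k-1}\}$ and a little buffering can reappear. I expect to resolve this either by assuming essentially bounded $T$ and $C$, where $\lambda\ge 2\,\mathrm{ess\,sup}(T+C)+\mathbb{E}[T]+\mathbb{E}[C]$ makes the degeneration exact, or by showing $\mathbb{P}(T+C>\tau)\to0$ and $\mathbb{E}[(T+C-\tau)^+]\to0$ so the corrections are $o(\tau)$ and are absorbed by the ergodic average in the large-$\lambda$ limit.
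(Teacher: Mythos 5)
Your proposal follows essentially the same route as the paper's proof: for large $\lambda$ the policy degenerates to the $h$-rule with deterministic spacing $\tau=\lambda-\mathbb{E}[T]-\mathbb{E}[C]$, the buffer stays empty so $D_k=0$, and substituting $T_{k-1}+W_k=\tau$ and $D_k=0$ into \eqref{eq:Problem} yields \eqref{eq:bigD}. Where you go beyond the paper is in rigor. The paper's proof simply asserts $P[W_k>T_{k-1}+C_{k-1}]=1$ and $P[D_k=0]=1$ ``for any distribution of $T_{k-1},C_{k-1}$'' at a finite (sufficiently large) $\lambda$; as you correctly observe in your ``hardest step,'' this cannot literally hold when $T+C$ has unbounded support --- notably for the exponential distributions used in the paper's own simulations --- so the gap you flag is present in the paper's argument, not only in yours. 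Your two proposed patches are exactly what is needed to make the lemma precise: essential boundedness with $\lambda\ge 2\,\mathrm{ess\,sup}(T+C)+\mathbb{E}[T]+\mathbb{E}[C]$ gives the statement verbatim, while showing $\mathbb{P}(T+C>\tau)\to 0$ and $\mathbb{E}[(T+C-\tau)^+]\to 0$ reinterprets ``sufficiently large'' as a vanishing-error limit, which is presumably the intended reading. You also make explicit two steps the paper leaves implicit: the verification that for large $\lambda$ the planned $g$-instant of \eqref{def:tg} falls past the delivery epoch so the source actually switches to the $h$-rule of \eqref{def:th} (note that your threshold ``$\lambda$ exceeds $\hat{A}_k^g(d_{k-1})$'' is itself random, so this step inherits the same boundedness caveat), and the pathwise computation $Q_k=\frac{1}{2}\tau^2+\tau(T_k+C_k)$ from \eqref{def:qk} combined with Birkhoff's theorem, which supplies the ``with probability 1'' content that the paper's expectation-based substitution into \eqref{eq:Problem} only gestures at. In short: same approach, executed more carefully, and your treatment of the unbounded-support issue strengthens rather than deviates from the paper's argument.
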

\begin{proof}
When $\lambda$ is sufficiently large, the source node will wait a long time after packet $k-1$ is sent before sending packet $k$. Thus, $W_k$ is sufficiently large so that $P[W_k>T_{k-1}+C_{k-1}]=1$ and $P[D_k=0]=1$ for any distribution of $T_{k-1},C_{k-1}$. Therefore, the source will send packet $k$ at time $t=\widetilde{t}^h=\lambda+t_{k-1}-\mathbb{E}[T]-\mathbb{E}[C]$ with probability 1, that is, $W_k=\lambda-T_{k-1}-\mathbb{E}[T]-\mathbb{E}[C]$. By substituting $W_k$ and $P[D_k=0]=1$ into \eqref{eq:Problem}, we have \eqref{eq:bigD}.
\end{proof}
Based on Lemma \ref{newLemma2}, we can find the optimal $\lambda$ in the interval $[\mathbb{E}[T]+\mathbb{E}[C],u]$ with arbitrary precision through an exhaustive search, where $u$ is a sufficiently large positive number. Numerical simulations in Sec. \ref{sec:numerical} show that the optimal search can be made with relatively small $u$.

\subsection{Peak Age Threshold Policy With Postponed Plan}

Under constraint \eqref{cons:loosercons}, the waiting time for a packet in the buffer is destructive for AoI minimization. We expect that a packet can be processed immediately when it arrives at the server. In the peak age threshold policy, we can achieve this by increasing $\lambda$, however, this may also increase age. Therefore, we try to postpone the packet transmission plan when the server is still busy based on the peak threshold age policy: 1) If the source node plans to send packet $k$ at time $\widetilde{t}^g_p<c_{k-1}+C_{k-1}$ before packet $k-1$ is delivered, the source defers the plan until time $t$, which satisfies $t+\mathbb{E}[T]\ge c_{k-1}+\mathbb{E}[C_{k-1}|C_{k-1}>t-c_{k-1}]$, or packet $k-1$ is delivered. 2) If the source node plans to send packet $k$ after packet $k-1$ is delivered, it sends at time $\widetilde{t}^h_p$ the same as the peak age threshold policy. Let $\Omega$ denote the set of $l\ge 0$ satisfying $l+\mathbb{E}[T]\ge \mathbb{E}[C|C>l]$. The peak age threshold policy with postponed plan satisfies $g_{pp}(T_{k-1},D_{k-1})=g_p(T_{k-1},D_{k-1})$ and
\begin{equation}
    h_{pp}(T_{k-1},D_{k-1},C_{k-1})=\widetilde{t}^h_{pp}-t_{k-1}-T_{k-1},
\end{equation}
where
\begin{equation}
    \widetilde{t}^g_{pp}=\underset{t-c_{k-1}\in \Omega,\hat{A}_k^g\ge \lambda}{\text{arg min}}\hat{A}_k^g.\label{eq:tpo}
\end{equation}

When $C_k$'s obey the exponential distribution, based on the memoryless property, we have $\mathbb{E}[C|C>l]=\mathbb{E}[C]+l$ for $l\ge 0$. In this special case, if $\mathbb{E}[T]\ge \mathbb{E}[C]$, we have $\Omega=\{l|l\ge 0\}$, thus the postponed plan has no effect on the peak age threshold policy as $\widetilde{t}^g_{pp}=\widetilde{t}^g_p$. If $\mathbb{E}[T]< \mathbb{E}[C]$, $\Omega=\emptyset$, thus the equation \eqref{eq:tpo} has no solution. The source will not send new packet until the previous packet is delivered. Therefore, the peak age threshold policy with postponed plan degrades to the long wait policy and the optimal $\lambda$ is $\beta +\mathbb{E}[T]+\mathbb{E}[C]$.

\section{Numerical results}
\label{sec:numerical}
\begin{figure}[!t]
    \centering
    \includegraphics[width=0.8\linewidth]{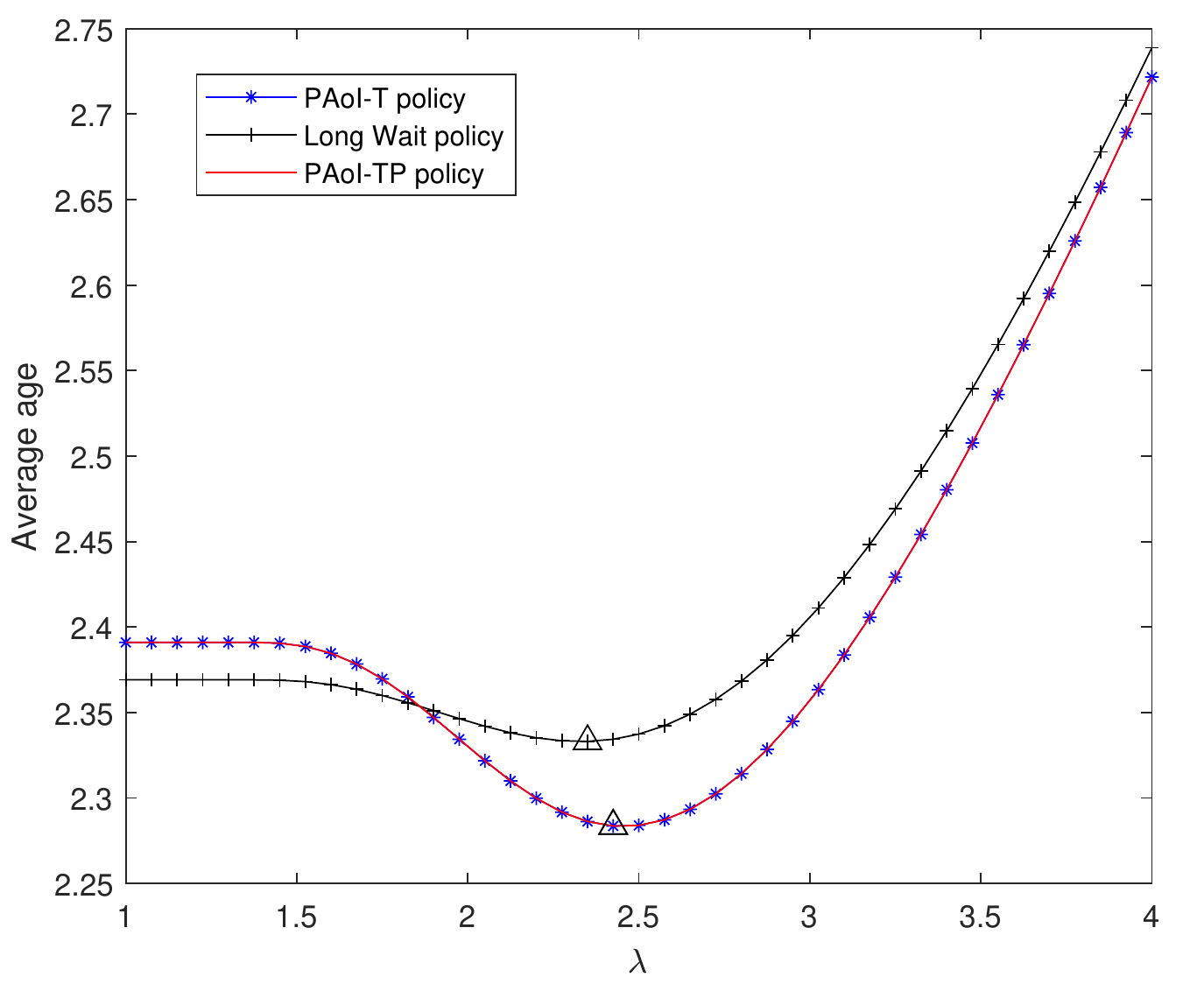}
    \caption{Average age for \emph{i.i.d.} exponentially distributed transmission and processing time with $\mathbb{E}[T]/\mathbb{E}[C]=4$.}\label{fig:E1}
\end{figure}
\begin{figure}[htp]
    \centering
    \includegraphics[width=0.8\linewidth]{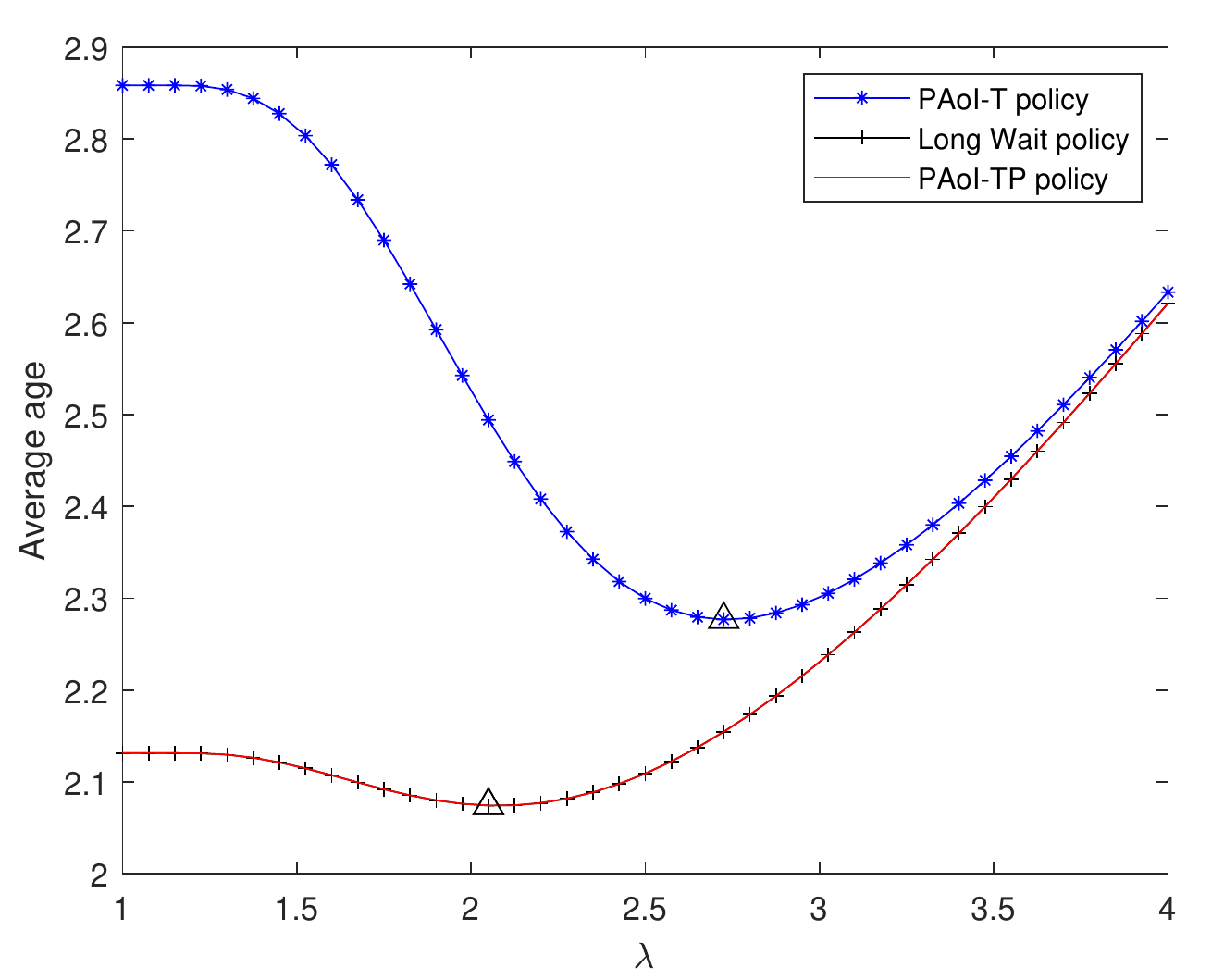}
    \caption{Average age for \emph{i.i.d.} exponentially distributed transmission and processing time with $\mathbb{E}[T]/\mathbb{E}[C]=0.25$.}\label{fig:E2}
\end{figure}

\begin{figure}[!t]
    \centering
    \includegraphics[width=0.8\linewidth]{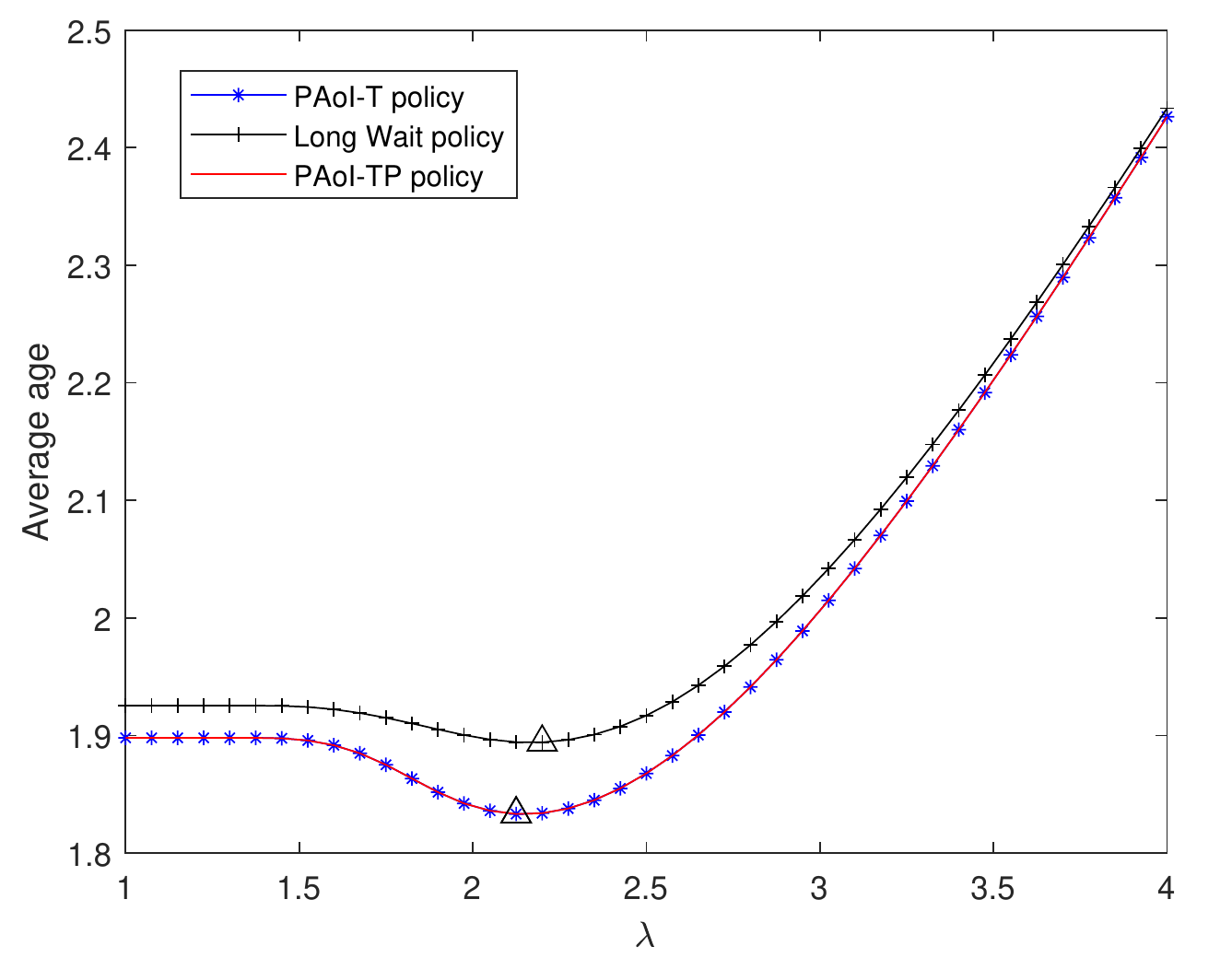}
    \caption{Average age for \emph{i.i.d.} exponentially distributed transmission time and uniformly distributed processing time with $\mathbb{E}[T]/\mathbb{E}[C]=4$.}\label{fig:E3}
\end{figure}
\begin{figure}[!t]
    \centering
    \includegraphics[width=0.8\linewidth]{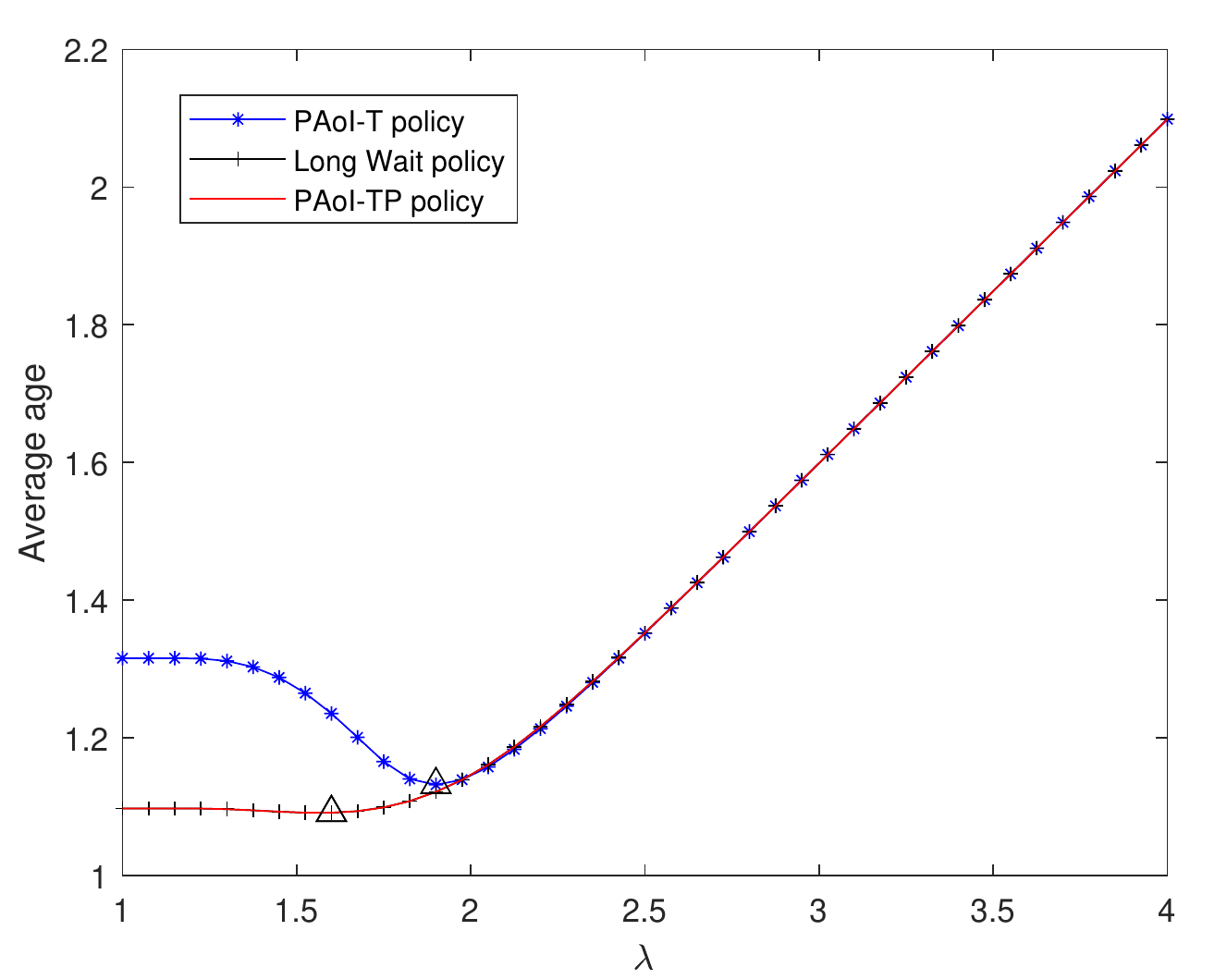}
    \caption{Average age for \emph{i.i.d.} exponentially distributed transmission time and uniformly distributed processing time with $\mathbb{E}[T]/\mathbb{E}[C]=0.25$.}\label{fig:E4}
\end{figure}
\begin{figure}[!t]
    \centering
    \includegraphics[width=0.8\linewidth]{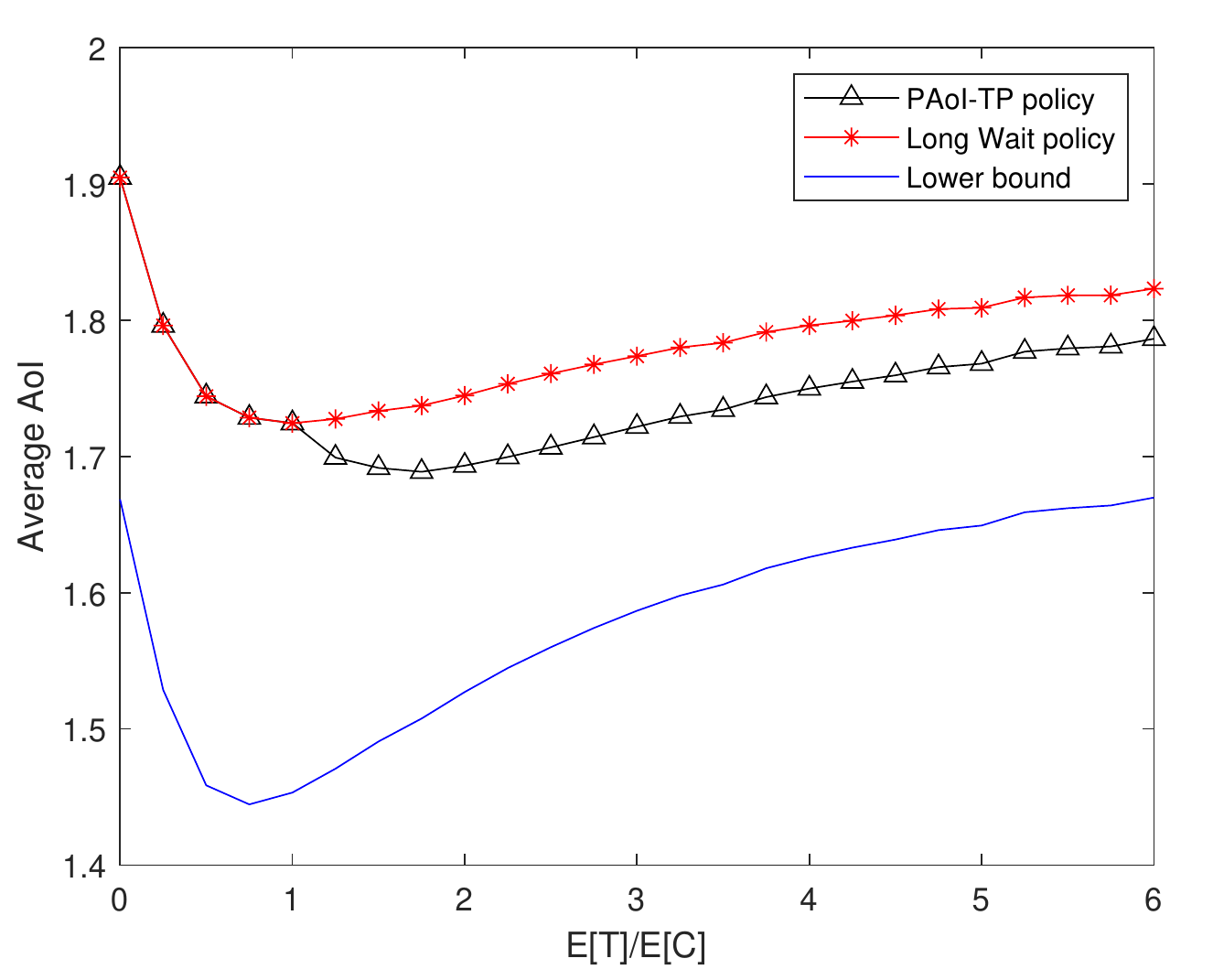}
    \caption{Average age for \emph{i.i.d.} exponentially distributed transmission time and processing time.}\label{fig:E5}
\end{figure}

In this section, we study the average age through numerical simulation. $\mathbb{E}[T]+\mathbb{E}[C]$ remains unchanged at 1 for all experiments and $\mathbb{E}[T]/\mathbb{E}[C]$ changes for different experiments. The initial value $T_0$ is set to 1 and $C_0$ is set to 0. We simulate 100,000 packets to numerically calculate the average age. Firstly, we analyze the average age with exponentially distributed transmission time and processing time. The average age as a function of the threshold $\lambda$ is shown in Fig. \ref{fig:E1} under the peak age threshold policy (PAoI-T) and the peak age threshold policy with postponed plan (PAoI-TP). The long wait policy is also shown in Fig. \ref{fig:E1}, under which the parameter $\lambda$ refers to $\beta+\mathbb{E}[T]+\mathbb{E}[C]$. As shown before, the PAoI-T policy and the PAoI-TP policy are the same in this case. With the optimal parameter $\lambda = 2.425$, the two policies have a lower average age than the long wait policy with optimal parameter $\lambda = 2.35$. This indicates that the source node sending a new packet before the previous one is delivered can effectively reduce the average age. With $\mathbb{E}[T]/\mathbb{E}[C]=0.25$, the average age result is shown in Fig. \ref{fig:E2}. In this case, the PAoI-TP policy is equivalent to the long wait policy. It can be seen that the optimal long wait policy has lower average age than the PAoI-T policy. When $\lambda<2$, the average age under the PAoI-T policy is much higher than that under the long wait policy. This is because $\mathbb{E}[T]$ is much smaller than $\mathbb{E}[C]$. In this case, the probability of packets entering the buffer is high when $\lambda$ is small.

In Figs. \ref{fig:E3} and \ref{fig:E4}, the average age with exponentially distributed transmission time and uniformly distributed processing time is shown with $\mathbb{E}[T]/\mathbb{E}[C]=4$ and $\mathbb{E}[T]/\mathbb{E}[C]=0.25$, respectively. The PAoI-T policy is equivalent to the PAoI-TP policy in Fig. \ref{fig:E3} and the long wait policy in Fig. \ref{fig:E4} similar to the case of exponentially distributed processing time. The optimal PAoI-T policy has a smaller average age in both cases.

Based on Lemma \ref{newLemma2}, the average age for the problem \eqref{eq:Problem} under the PAoI-T policy increases linearly with $\lambda$ when $\lambda$ is sufficiently large. The above experiments show that $[1,4]$ is a sufficient interval to search for the optimal $\lambda$ when $\mathbb{E}[T]+\mathbb{E}[C]=1$. Therefore, the average age for the optimal PAoI-TP policy for $\mathbb{E}[C]/\mathbb{E}[C]$ varies from 0 to 6 is shown in Fig. \ref{fig:E5} with exponentially distributed transmission time and processing time. The optimal $\lambda$'s are obtained by searching in the interval $[1,4]$. The optimal long wait policy from Lemma \ref{newLemma1} is also shown in Fig. \ref{fig:E5}. The lower bound in the figure is obtained by solving the offline version of the problem \eqref{eq:Problem} under constant \eqref{cons:loosercons} using the convex optimization method. When $\mathbb{E}[T]/\mathbb{E}[C]>1$, the PAoI-TP policy is better than the long wait policy.

\section{conclusion}
\label{sec:conclusion}

In this paper, we studied how to jointly schedule transmission and processing in a status-update system with mobile edge computing to optimize the average AoI. The source node only sends a new packet after the old packet is delivered under the long wait policy. Under the peak age threshold policy, the source node will estimate the peak AoI of the packet to be transmitted and only transmit it when the estimation exceeds a threshold, regardless of the busy or idle state of the edge server. An improved policy, called the peak age threshold policy with postponed plan, is proposed based on the consideration of reducing the waiting time of a packet in the buffer. When the processing time is exponentially distributed, the age behavior is strongly dependent on $\mathbb{E}[T]/\mathbb{E}[C]$. When $\mathbb{E}[T]/\mathbb{E}[C]<1$, the peak age threshold policy with postponed plan degrades to a long wait policy. In this case, the postponed plan effectively reduces the AoI. When $\mathbb{E}[T]/\mathbb{E}[C]>1$, the postponed plan has no effect on the peak age threshold policy. In this case, the peak age threshold policy is better than the long wait policy.


\begin{thebibliography}{00}
\bibitem{kaul2012real} S. Kaul, R. Yates and M. Gruteser, "Real-time status: How often should one update?," 2012 Proceedings IEEE INFOCOM, 2012, pp. 2731-2735.
\bibitem{MEC} Y. Mao, C. You, J. Zhang, K. Huang and K. B. Letaief, "A Survey on Mobile Edge Computing: The Communication Perspective," in IEEE Communications Surveys \& Tutorials, vol. 19, no. 4, pp. 2322-2358, Fourthquarter 2017.
\bibitem{TwohopCMDP} Y. Gu, Q. Wang, H. Chen, Y. Li and B. Vucetic, "Optimizing Information Freshness in Two-Hop Status Update Systems Under a Resource Constraint," in IEEE Journal on Selected Areas in Communications, vol. 39, no. 5, pp. 1380-1392, May 2021.
\bibitem{Twohop} A. Arafa and S. Ulukus, "Timely Updates in Energy Harvesting Two-Hop Networks: Offline and Online Policies," in IEEE Transactions on Wireless Communications, vol. 18, no. 8, pp. 4017-4030, Aug. 2019.
\bibitem{Sunmultihop} A. M. Bedewy, Y. Sun and N. B. Shroff, "The Age of Information in Multihop Networks," in IEEE/ACM Transactions on Networking, vol. 27, no. 3, pp. 1248-1257, June 2019.
\bibitem{BBmultihop} B. Buyukates, A. Soysal and S. Ulukus, "Age of information in multihop multicast networks," in Journal of Communications and Networks, vol. 21, no. 3, pp. 256-267, Jun. 2019.
\bibitem{US1} Q. Kuang, J. Gong, X. Chen, X. Ma, “Analysis on Computation-Intensive Status Update in Mobile Edge Computing,” IEEE Transactions on Vehicular Technology, Vol. 69, No. 4, pp. 4353-4366, Apr. 2020.
\bibitem{US2} J. Gong, Q. Kuang, X. Chen, “Joint Transmission and Computing Scheduling for Status Update with Mobile Edge Computing,” IEEE International Conference on Communications, online virtual, Jun. 2020.
\bibitem{Xian2019} X. Song, X. Qin, Y. Tao, B. Liu and P. Zhang, "Age Based Task Scheduling and Computation Offloading in Mobile-Edge Computing Systems," 2019 IEEE Wireless Communications and Networking Conference Workshop, 2019, pp. 1-6.
\bibitem{Pzhou2021} P. Zou, X. Wei, O. Ozel, T. Lan and S. Subramaniam, "Timely Probabilistic Data Preprocessing in Mobile Edge Computing," 2021 IEEE Wireless Communications and Networking Conference, 2021, pp. 1-6.
\bibitem{Hlv2021} H. Lv, Z. Zheng, F. Wu and G. Chen, "Strategy-Proof Online Mechanisms for Weighted AoI Minimization in Edge Computing," in IEEE Journal on Selected Areas in Communications, vol. 39, no. 5, pp. 1277-1292, May 2021.
\bibitem{Pzhou2021Opt} P. Zou, O. Ozel and S. Subramaniam, "Optimizing Information Freshness Through Computation Transmission Tradeoff and Queue Management in Edge Computing," in IEEE/ACM Transactions on Networking, vol. 29, no. 2, pp. 949-963, Apr. 2021.
\bibitem{SunUp} Y. Sun, E. Uysal-Biyikoglu, R. D. Yates, C. E. Koksal and N. B. Shroff, "Update or Wait: How to Keep Your Data Fresh," in IEEE Transactions on Information Theory, vol. 63, no. 11, pp. 7492-7508, Nov. 2017.
\bibitem{PAoI} M. Costa, M. Codreanu and A. Ephremides, "Age of information with packet management," 2014 IEEE International Symposium on Information Theory, 2014, pp. 1583-1587.


% \bibitem{kaul2012status} S. K. Kaul, R. D. Yates and M. Gruteser, "Status updates through queues," 2012 46th Annual Conference on Information Sciences and Systems (CISS), 2012, pp. 1-6, doi: 10.1109/CISS.2012.6310931.
% \bibitem{najm2016age} E. Najm and R. Nasser, "Age of information: The gamma awakening," 2016 IEEE International Symposium on Information Theory (ISIT), 2016, pp. 2574-2578, doi: 10.1109/ISIT.2016.7541764.
% \bibitem{bed2016opti} A. M. Bedewy, Y. Sun and N. B. Shroff, "Optimizing data freshness, throughput, and delay in multi-server information-update systems," 2016 IEEE International Symposium on Information Theory (ISIT), 2016, pp. 2569-2573, doi: 10.1109/ISIT.2016.7541763.
% \bibitem{yates2018age} R. D. Yates, "Age of information in a network of preemptive servers," IEEE INFOCOM 2018 - IEEE Conference on Computer Communications Workshops (INFOCOM WKSHPS), 2018, pp. 118-123, doi: 10.1109/INFCOMW.2018.8406966.
% \bibitem{najm2018status} E. Najm and E. Telatar, "Status updates in a multi-stream M/G/1/1 preemptive queue," IEEE INFOCOM 2018 - IEEE Conference on Computer Communications Workshops (INFOCOM WKSHPS), 2018, pp. 124-129, doi: 10.1109/INFCOMW.2018.8406928.
% \bibitem{20} Y. Inoue, H. Masuyama, T. Takine and T. Tanaka, "A General Formula for the Stationary Distribution of the Age of Information and Its Application to Single-Server Queues," in IEEE Transactions on Information Theory, vol. 65, no. 12, pp. 8305-8324, Dec. 2019, doi: 10.1109/TIT.2019.2938171.
% \bibitem{costa2016on} M. Costa, M. Codreanu and A. Ephremides, "On the Age of Information in Status Update Systems With Packet Management," in IEEE Transactions on Information Theory, vol. 62, no. 4, pp. 1897-1910, April 2016, doi: 10.1109/TIT.2016.2533395.
% \bibitem{Prework} J. Gong, Q. Kuang and X. Chen, "Joint Transmission and Computing Scheduling for Status Update with Mobile Edge Computing," ICC 2020 - 2020 IEEE International Conference on Communications (ICC), 2020, pp. 1-6, doi: 10.1109/ICC40277.2020.9149221.


\end{thebibliography}
\end{document}